\NewDocumentCommand{\ceil}{s O{} m}{%
	\IfBooleanTF{#1} 
	{\left\lceil#3\right\rceil} 
	{#2\lceil#3#2\rceil} 
}
\newcommand\disteq{\stackrel{\mathclap{\normalfont\mbox{\scriptsize d}}}{=}}
\newtheorem{theorem}{Theorem}
\newtheorem*{theorem*}{Theorem}
\newtheorem{definition}[theorem]{Definition}
\newtheorem{example}[theorem]{Example}
\begin{document}
\title{(Quantum) Fractional Brownian Motion and Multifractal Processes under the Loop of a Tensor Networks} 
\author{Beno\^it Descamps}%
\email{benoit.descamps@univie.ac.at}
\affiliation{Faculty of Physics, University of Vienna, Austria}%
\affiliation{Department of Physics and Astronomy, University of Ghent}%
\date{\today}

\begin{abstract}
We derive fractional Brownian motion and stochastic processes with multifractal properties using a framework of network of Gaussian conditional probabilities.
This leads to the derivation of new representations of fractional Brownian motion. These  constructions are inspired from renormalization.
The main result of this paper consists of constructing each increment of the process from two-dimensional gaussian noise inside the light-cone of each seperate increment. Not only does this allows us to derive fractional Brownian motion, we can introduce extensions with multifractal flavour. 
In another part of this paper, we discuss the use of the  multi-scale entanglement renormalization ansatz (MERA), introduced in the study critical systems in quantum spin lattices, as a method for sampling integrals with respect to such multifractal processes. After proper calibration, a MERA promises the generation of a sample of size $N$ of a multifractal process in the order of $O(N\log(N))$, an improvement over the known methods, such as the Cholesky decomposition and the circulant methods,  which scale between $O(N^2)$ and $O(N^3)$.

\end{abstract}

\maketitle
\section{Introduction}
The study of long memory processes \cite{Beran_1994_statistics} is an old story. When sampling the same population at different point in  times, $X_1,..., X_n$, we often hope and assume that the time-average reflects the local average,
$$\overline{X}= \frac{1}{n}\sum_j X_j,~~E(\overline{X}) \approx \frac{1}{n}E(X_j)$$
The central limit theorem tells us that this is indeed true if $X_j$ are identically distributed and are independent. The variance is then inversely proportional with time, the sample size, and thus decay.
This fact remains true even when $X_j$ have weak correlations such as an exponential decay \cite{attal2012central, Sinayskiy_2013_OQW}.
This universality of the inverse proportionality with the sample size, breaks down when the correlations are stronger and decay polynomially. This phenomena of long range dependence is very well studied in many areas of physics, from statistical physics to quantum field theory.

In statistics, many classes of gaussian processes with long-range depency have been studied throughout history. One of the most studied and well known is certainly fractional Brownian motion.
These processes were first introduced by Kolmogorov in 1940 when investigating turbulences. Perhaps it is Benoit Mandelbrot who truly recognized the importance of the fractional Brownian motion. In 1965, he published his insights on the work of the hydrologist Harold Erwin Hurst, who observed discrepancies in the yearly variation of the levels of the Nile river \cite{Koutsoyiannis_2002_hurst,Mandelbrot_1968_fractional,Mandelbrot_1999_turbulence}.
\begin{figure}[t!]
	\includegraphics[width=0.4\textwidth]{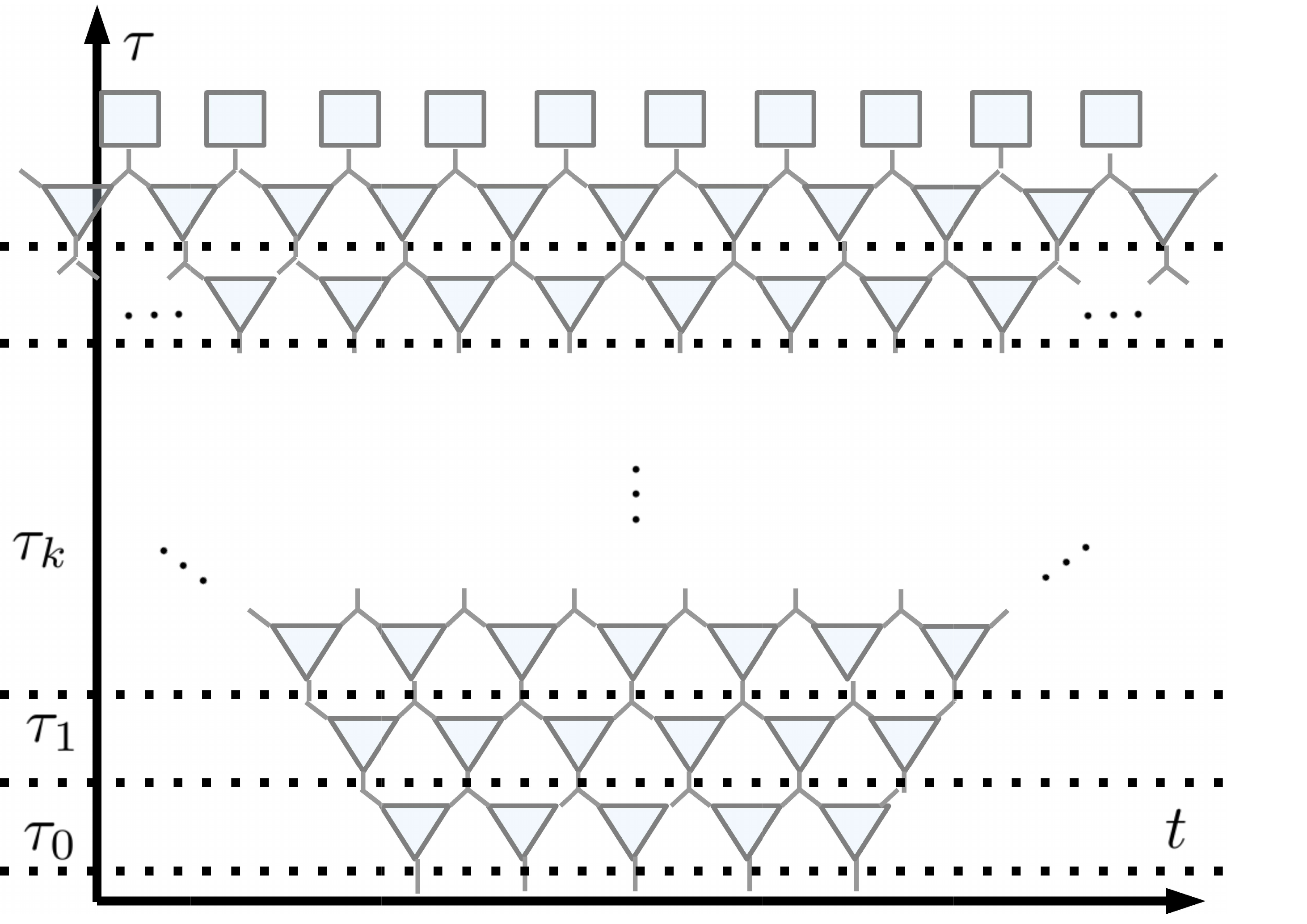}
	\caption{Network representation of the joint probability distribution of a stochastic process.  }
	\label{fig:CIRCUIT}
\end{figure}
Using a fractional integration of the Brownian motion,  the following process $B_H(t)$ is introduced,
\begin{align}
B_H(t) = &\int_{-\infty}^{0} \{(t-s)^{H-1/2}-(-s)^{H-1/2}\}dB(s)\nonumber\\
&+\int_{0}^{t}(t-s)^{H-1/2}dB(s) \label{eq:int_fbm}
\end{align}
The constant $H \in (0,1)$ is also known as the Hurst index. For $H= 1/2$ this reduces to regular Brownian motion.

With the rise of computational power, new methods were developped for simulating condensed matter. The challenge faced in such system was the exponentially growing number of parameters. Solutions were presented in the form of various ansatz states \cite{Perez_2006_MPS,verstraete2008matrix}. These states were constructed from networks of tensors with a particular geometry.

In 2005, Vidal presented an tensor network ansatz reminiscent of renormalization to simulate quantum critical systems \cite{Vidal_mera}. While being mostly used for numerics, such scheme sparks various interests in other areas such as high energy physics \cite{nozaki2012holographic} etc... A continuum version was presented in 2010 by Haegemann et. al. \cite{Haegeman_cmera}.

In this work, we show that fractional brownian motion can be related with such networks (\ref{fig:CIRCUIT},\ref{fig:MERA}).  We start from a discrete process $(X_n)_{n=1}^N$ with a joint probability distribution $p(X_1,...,X_N)$ represented by a certain network. Under renormalization of the parameters of the network, we prove that the process $\sum_n X_n$ is precisely a fractional Brownian motion. The networks  also yield a new representation of fractional Brownian motion.

In the first part, we construct a new network with an underlying causal structure which follows from a renormalization flow. We will show that this network generates fractional brownian motion. The starting point of this work is the relationship between fractional brownian motion and renormalization \cite{hochberg2002renormalization}. Based on this knowledge, we also discuss the use of MERA for simulating such processes. While a MERA is more challenging to derive analitically, we can calibrate the parameters numerically. It turns out that the MERA structure offers the possibility of simulating such gaussian processes more efficiently, $O(N\log N)$, than the regular methods such as the Choleksy decompositon or circulant methods \cite{Dieker_simulation_fbm}.

The two networks (\ref{fig:MERA},\ref{fig:CIRCUIT}) presented in this work differ in the direction of the renormalization flow. While for MERA the real space is renormalized in the virtual space, this is quite the opposite in the second network.

\section{Network Representation of Stochastic Processes}

By compounding or integrating familar processes, Staticians derive new ones with new desired properties. The use of processes with simple distributions, such as Gaussian distributions, also permits the efficient simulations of the processes without having the derive the distribution of the new process. Sometimes, however, the opposite direction is necessary. This allows for a change of measure, which can further simplify the process. The most famous example is the so-called Girsanov theorem, which allows to eliminate the drift from a Brownian motion by change of measure \cite{girsanov1960transforming}. 

Time series \cite{Beran_1994_statistics} such as ARIMA, ARFIMA, etc... , are succesful techniques for tackling memory. These processes make use of the increments of a one-dimensional Brownian motion up to some time $t$,
$$ X_t = \int_0^t c(s)dB_s$$
At each later time, new increments are added.
In the era of tensor networks, besides the real, here time-, axis additional 	virtual dimensions are added. These new axes potentially and seemingly add new parameters, but present us with new insights in such processes.
The goals of this section is to discuss the expansion of the joint probability distribution $P(X_1,...,X_N)$ of the increments $\vec{X}_j$ of a process $Y_N = \sum_j X_j$ as a circuit of conditional probabilities. The circuit is set up along a virtual dimension, which we denote as $\tau$. The final output of the circuit is at $\tau_0$, which is then the seeked probability.
\begin{align*}
&P(\vec{X}_{t_j,\tau_0})=\sum_{j_1,...,j_{N_{\infty}}}P(\vec{X}_{t_{j_0},\tau_0}|\vec{X}_{t_{j_1}},\tau_1)\\&P(\vec{X}_{t_{j_1}},\tau_1|\vec{X}_{t_{j_2}},\tau_2)...P(\vec{X}_{t_{j_{N_\infty-1}}},\tau_{N_{\infty-1}}|\vec{X}_{t_{j_{N_\infty}}},\tau_{N_{\infty}})
\end{align*}
Both networks are intrinsically connected with renormalization. The first circuit (\ref{fig:CIRCUIT}) has a natural causal structure and is discussed in the next section.
We will also display the power of an already known network, MERA, shown in figure (\ref{fig:MERA}). Our  approach  in section (\ref{section:Numerical}) is then purely numerical.
\subsection{A Light-Cone network for fractional Brownian Motion}

In 1966, Leo P. Kadanoff proposed the "block-spin" renormalization group in his study of phase transitions of the Ising model. He hypothesized that because spins would line up in large blocks near the critical points, then neighbouring spins can be regrouped and treated as a single entity. This ansatz allowed him to rederive scaling laws near the critical point.

One of the properties of Fractional Brownian motion is the self-similarity of the process,
$$B^{\mbox{\scriptsize fm}}(at) \disteq |a|^{H}B^{\mbox{\scriptsize fm}}(t)$$
where the equality is understood as in distribution.
 This scaling property fits perfectly with the philosophy of Kadanoff's renormalization on a tree. If some parameter $m_{\tau_k}^{\tau}$ describes some group of tensors from a virtual time $\tau_k$ to $\tau$, then according to Kadanoff it should be a related by a rescaling when the block is made larger,
 $$m_{\tau_k}^{a\tau} \propto a^{H'} m_{\tau_k}^{\tau}$$
 \begin{figure}[t!]
 	\includegraphics[width=0.4\textwidth]{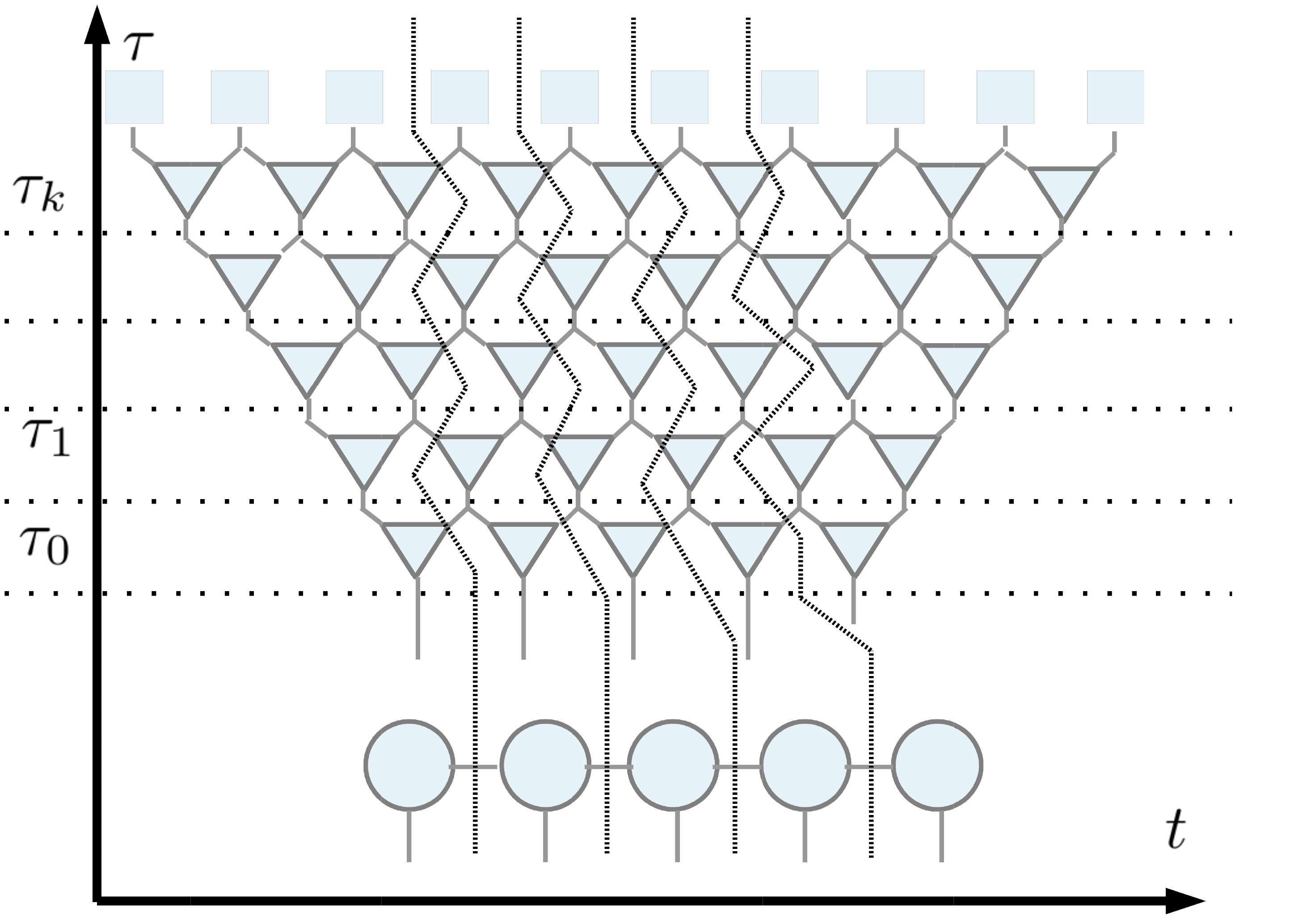}
 	\caption{The joint probability distribution can be approximated by a finite debt circuit. This in turn can be represented by a so-called Matrix Product State (\ref{eq:MPS}). Each of the local matrices of Matrix Product States have a covariance matrix of finite dimension.  }
 	\label{fig:MERAapproxMPS}
 \end{figure}
 Keeping these ideas in mind, we elaborate on the following arrangement of tensors pictured in figure (\ref{fig:CIRCUIT}). The output in the horizontal axis is the joint probabilty of the increments of the process at each time $t$. The tensors are contracted along the vertical axis, which we will call the virtual time $\tau$.

We start from a discretized construction. In order to derive a continuum limit, we need to refine the lattice spacing in the real and virtual time with the sample size $N$.
It turns out the following choice of tree tensors yields fractional Brownian motion, 
\begin{align}
P\left(Y^{t_{k}}_{\tau_k}\right.&\left.|Y^{t_{k}}_{\tau_{k+1}}=z_1 ,Y^{t_{k+1}}_{\tau_{k+1}}=z_2\right)\nonumber \\ &\propto  \mathcal{N}\left(y; m_{\tau_k}^{\tau_{k+1}} (z_1+z_2), \sigma^{t_k}_{\tau_{k}} \right) 
\label{eq:tree_transition}
\end{align}
\begin{figure}[t!]
	\includegraphics[width=0.4\textwidth]{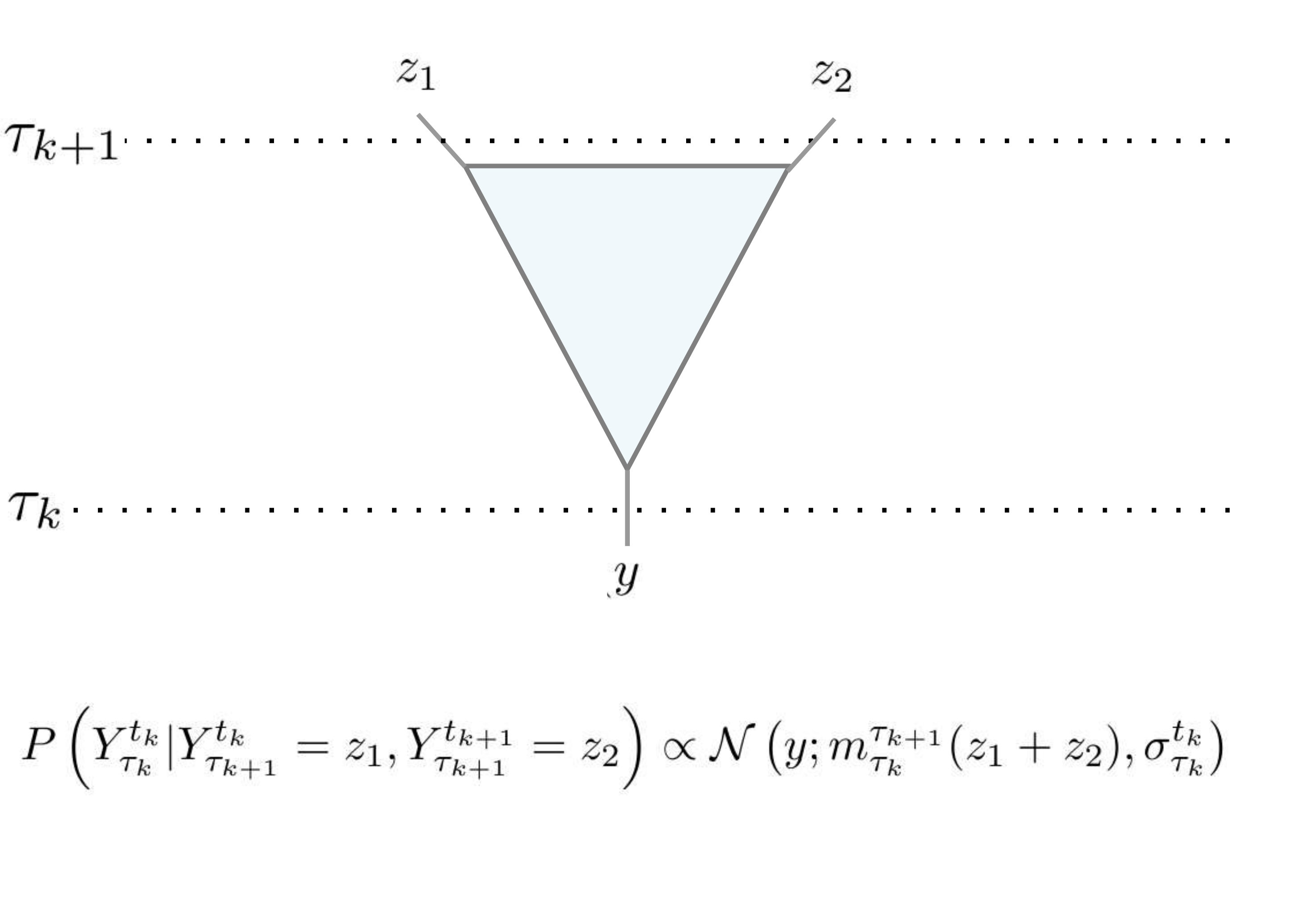}
	\caption{Tensor representation of the gaussian conditional probabilities used for the construction of (\ref{fig:CIRCUIT}).}
	\label{fig:transferCircuit}
\end{figure}
These components are illustrated graphically in figure (\ref{fig:transferCircuit}).
with the affine parameter $m_{\tau_k}^{\tau_{k+1}}$, 
$$m_{\tau_k}^{\tau_{k+1}} = \gamma_k^{k+1}\exp\left([1-H]\int_{\tau_k}^{\tau_{k+1}}ds\ \frac{1}{s}\right),~~ k>0$$
The normalization coefficient $\gamma_n^{n+1}$ is given by,
$$ \gamma_n^{n+1} = \frac{\left[(n+1)\left(\begin{array}{c}2(n+1) \\n\end{array}\right)\right]^{-1/2}}{\left[n\left(\begin{array}{c}2n \\n\end{array}\right)\right]^{-1/2}}$$
and the variance $\sigma_{\tau_k}$
,
$$ \sigma_{\tau_k} = \frac{1}{2}\sqrt{H|2H-1|\Gamma(1-H)^{-1}}\  \sigma$$

We introduce a cutoff on the lower boundary which depends on $\epsilon$,
$$m_{\tau_0}^{\tau_{1}} = \frac{1}{2}\exp\left([1-H]\int_{\epsilon^{1/|1-H|}}^{\tau_{1}}ds\ \frac{1}{s}\right),~~ k>0$$
with variance,
\begin{align*}
\sigma_{\tau_0} = \left\{\begin{array}{cc} \sqrt{\epsilon} & H = 1/2 \\
\frac{1}{2}\sqrt{H|2H-1|\Gamma(1-H)^{-1}}\  \sigma & \mbox{ else} \end{array}\right.
\end{align*}
The time interval $[0,T]$ is divided in $N$ segments of length $\epsilon$. Simultaneously, the virtual dimension is discretized, however, not uniformly,
$$\tau_n = \sqrt{n \epsilon^2} $$
 In the limit, $N\to \infty$ and $\epsilon\to 0$, we show that this joint probability distribution indeed represents fractional Brownian motion. Further details are given in the appendix (\ref{app:fbm}).

\begin{theorem}
The process $B^{\mbox{\scriptsize fm}}_T = \sum_j X_{t_j}$ with joint probalility distribution $P(X_{t_1},...,X_{t_N})$ constructed earlier, is a fractional brownian motion with Hurst index $H$ in the limit $N\to \infty$ and $t_N \to T$.
\end{theorem}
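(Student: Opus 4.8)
The plan is to reduce the identification of $B^{\mbox{\scriptsize fm}}_T$ to a single covariance computation, carried out entirely within the Gaussian category. Because every conditional probability in (\ref{eq:tree_transition}) is Gaussian with a mean that is a linear (zero-offset) function of its two parents and a variance independent of those parents, the whole network defines a jointly Gaussian vector $(X_{t_1},\dots,X_{t_N})$ with zero mean. Hence the partial-sum process $Y_{t_k}=\sum_{i\le k}X_{t_i}$ is a centred Gaussian process, and a centred Gaussian process is fractional Brownian motion with Hurst index $H$ if and only if its covariance equals $\frac{\sigma^2}{2}\big(t_i^{2H}+t_j^{2H}-|t_i-t_j|^{2H}\big)$. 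The theorem is therefore equivalent to showing that, as $N\to\infty$ and $t_N\to T$, the network covariance converges to this kernel; so first I would fix this reformulation and never leave the Gaussian setting.

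The second step is to unroll the linear recursion (\ref{eq:tree_transition}) into an explicit light-cone representation. Solving the Gaussian recursion downward from the apex $\tau_0$, each output increment $X_{t_k}=Y^{t_k}_{\tau_0}$ becomes a weighted sum of the independent injected noises $\xi^{t}_{\tau}$ (one per node, each of variance $\sigma_\tau^2$) lying inside the light-cone of $(t_k,\tau_0)$. The weight attached to a noise at depth $\tau_j$ is the sum, over all descending tree paths from that node to the apex, of the product of the affine parameters $m^{\tau_{\ell}}_{\tau_{\ell-1}}$ met along the path. Since each node feeds two children, the number of such paths from a node $j$ levels up is a binomial coefficient, which is exactly why the normalisation $\gamma_n^{n+1}$ is built from $\binom{2n}{n}$: it is engineered so that the path-summed, level-by-level products telescope. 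This bookkeeping yields a closed form for $\mathrm{Var}(Y_{t_k})$ and, with a little more care about overlapping light-cones, for $\mathrm{Cov}(Y_{t_i},Y_{t_j})$, expressed as a single sum over virtual depth of the binomial path-count times the squared product of the affine parameters along a path times $\sigma_\tau^2$.

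The third step is the continuum limit. Substituting $\int_{\tau_k}^{\tau_{k+1}}ds/s=\ln(\tau_{k+1}/\tau_k)$ turns the affine parameter into the power $m^{\tau_{k+1}}_{\tau_k}=\gamma_k^{k+1}(\tau_{k+1}/\tau_k)^{1-H}$, and with the non-uniform grid $\tau_n=\epsilon\sqrt{n}$ the products along a path collapse to a single ratio of $\tau$'s to the power $1-H$, dressed by the telescoping $\gamma$'s. Applying Stirling's formula to the binomial path-counts converts the depth sum into a Riemann sum whose integrand is a power of the depth variable; recognising the limiting integral as a Beta/Gamma integral reproduces the factor $\Gamma(1-H)^{-1}$ and the coefficient $H|2H-1|$ hidden in $\sigma_{\tau_k}$, and returns the variance $\sigma^2 t_k^{2H}$ together with the cross term $-|t_i-t_j|^{2H}$. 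The lower cutoff $m^{\tau_0}_{\tau_1}$ and the special value of $\sigma_{\tau_0}$ regularise the $s\to 0$ singularity of $1/s$ and fix the $H=1/2$ boundary case, where the construction must reduce to ordinary Brownian motion. I expect the main obstacle to be precisely this asymptotic matching: proving that the Stirling-approximated, path-weighted depth sum converges to the fractional integral with every multiplicative constant — the $\gamma$ telescoping, the $\Gamma(1-H)$, and the $H|2H-1|$ — landing in exactly the right place, and controlling the error terms uniformly in $i,j$ so that the limiting kernel is genuinely $\frac{\sigma^2}{2}(t_i^{2H}+t_j^{2H}-|t_i-t_j|^{2H})$ and not merely proportional to it.
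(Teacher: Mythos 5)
Your proposal follows essentially the same route as the paper's proof: reduce to a covariance computation by Gaussianity, unroll the recursion into a light-cone sum over injected noises with binomial path counts, then take the continuum limit via Stirling's formula to recognise the $\Gamma(1-H)$ integral and convert the resulting $|t_j-t_k|^{2H-2}$ increment correlation into the fBm kernel by a double integral. The obstacle you flag at the end --- pinning down every multiplicative constant rather than mere proportionality --- is real, and is in fact left loose in the paper itself, which carries the computation only up to $\propto$ signs.
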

\begin{proof}
	Since the process is gaussian, it is sufficient to show that,
$$ E[B^{\mbox{\scriptsize fm}}_t B^{\mbox{\scriptsize fm}}_s]= \frac{1}{2}\sigma^2 \left( t^{2H} + s^{2H}-|t-s|^{2H}\right),~~ s<t<T$$
One should see that the following relation for $H\not = 1/2$,
\begin{align*}
\frac{1}{2}\left( t^{2H} + \right.&\left.s^{2H}-|t-s|^{2H}\right) \\&= H (2H-1)\int_0^t du_{1}\int_0^s du_2 |u_1-u_2|^{2H-2}
\end{align*}
Using this relation and noticing that,
$$E\left(X_{t_j}X_{t_k}\right) \approx \sigma \epsilon^2H (2H-1) |t_j-t_k|^{2H-2} $$
the claim follows. 
\end{proof}
\begin{figure}[t!]
	\vspace{-3 cm}
	\subfloat{\includegraphics[width = 0.4\textwidth]{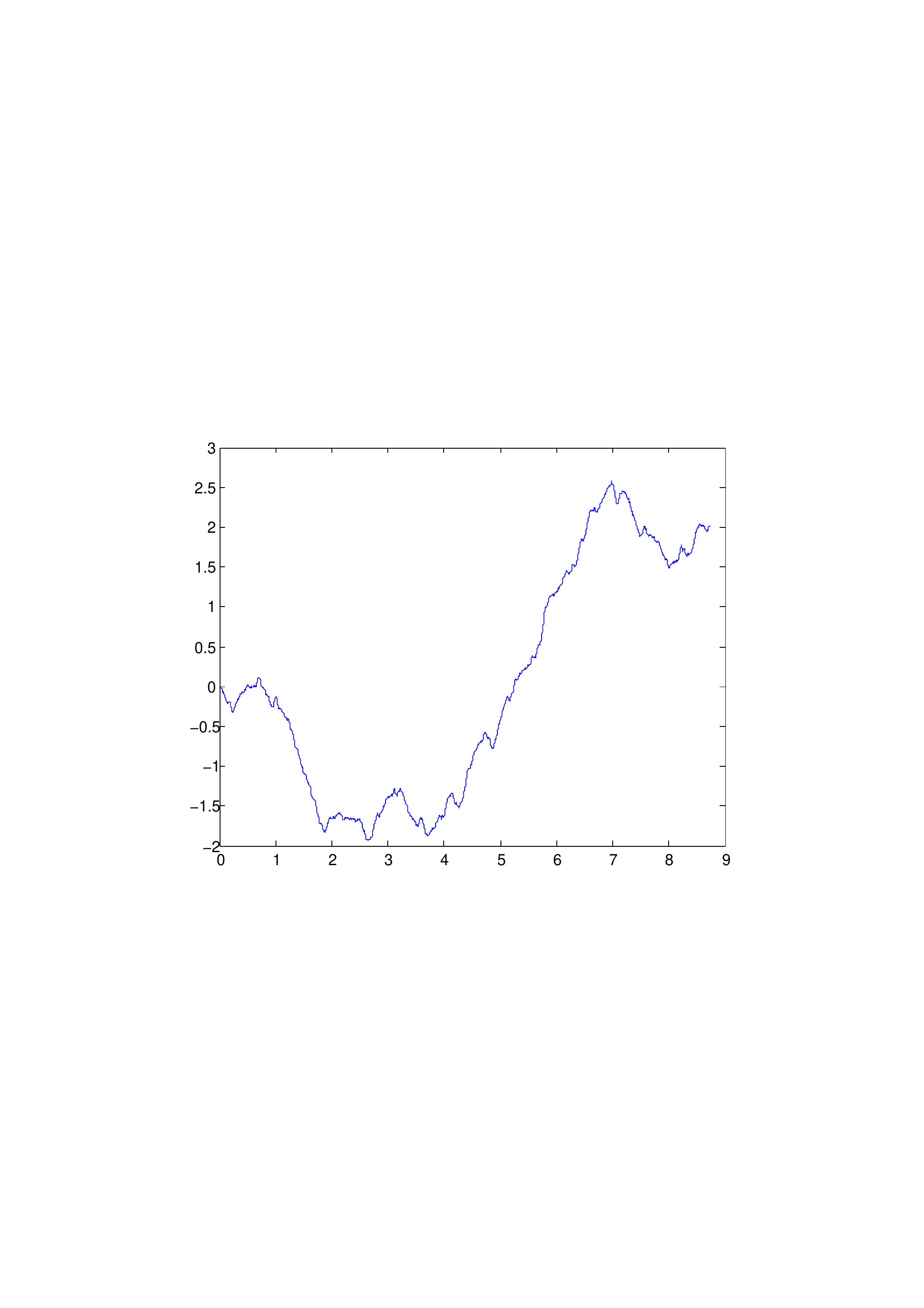}} \\
	\vspace{-7 cm}
	\subfloat{\includegraphics[width = 0.4\textwidth]{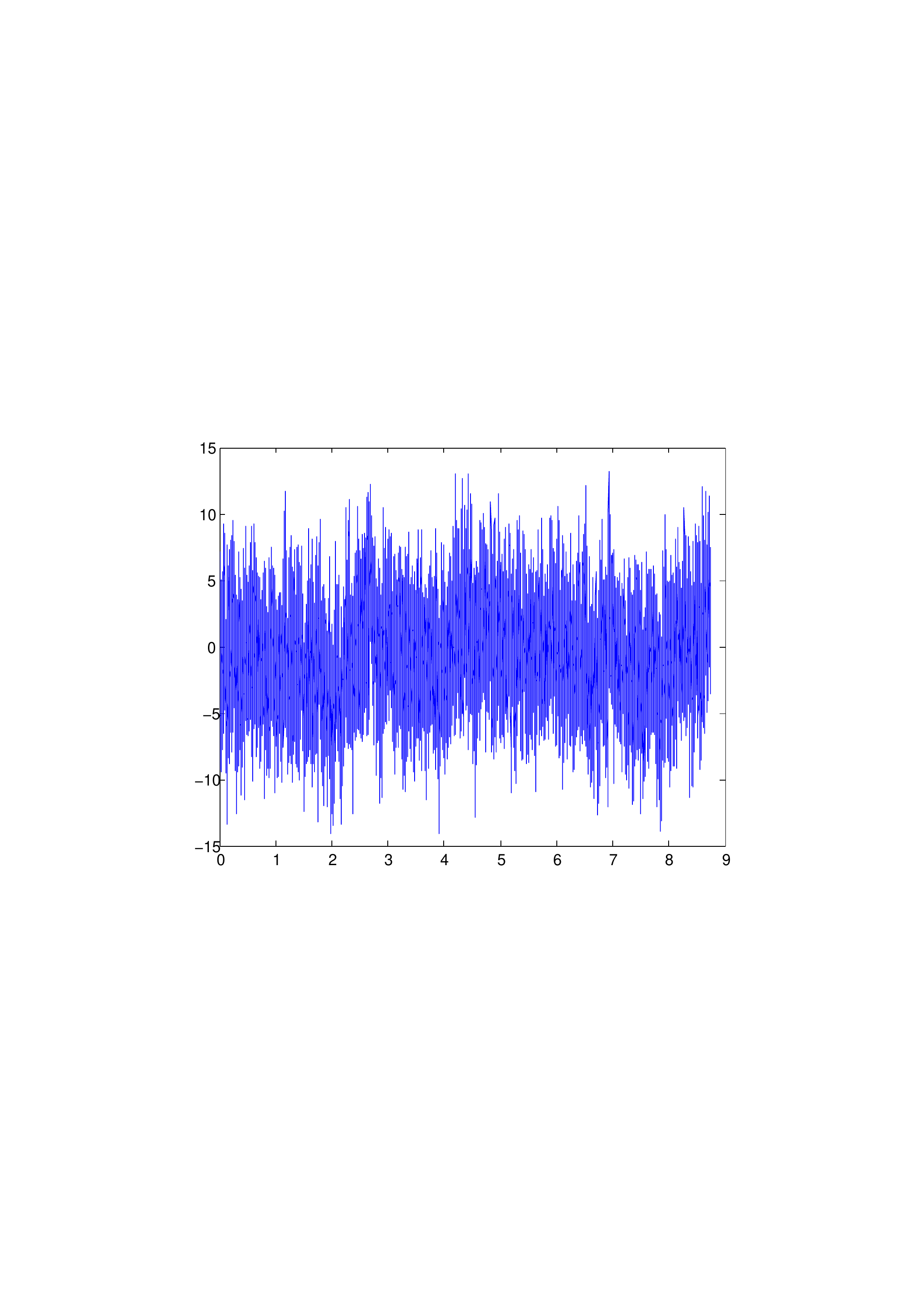}}
	\vspace{-3 cm}
	\caption{Plot of the path and increments of a fractional Brownian motion generated using MERA.}
	\label{fig:path_incr}
\end{figure}
\subsection{MERA: Sampling integrals of fractional Brownian motion}
\label{section:Numerical}

Clearly, the integral representation (\ref{eq:int_fbm}) is very difficult to sample. There exists however a few methods \cite{Dieker_simulation_fbm} for sampling fractional brownian motion. 
An even-more challenging problem is the sampling of an integral with respect to fractional Brownian motion. 
$$\int_0^t f(s)dB^{\mbox{\scriptsize fm}}(s) $$
Networks seem to present an elegant alternative solution to this problem. As we can easily sample $X_{t_j}$, so that,
\begin{equation} 
\label{eq:sampling int}
\sum_j  f(t_j)X_{t_j}\to \int_0^t f(s)dB^{\mbox{\scriptsize fm}}_s 
\end{equation}
converges to the desired result for sufficiently large $N$.
The sampling of the sum in equation (\ref{eq:sampling int}) can be done in the following  way.
As often used in controle theory, the transition tensor $P(Y_{\tau_k}|Z_{\tau_{k+1}}=z)$ in equation (\ref{eq:tree_transition}) can be shown to be equivalent to the algebraic equation,
\begin{equation}
Y^{t_{k}}_{\tau_k}=m_{\tau_k}^{\tau_{k+1}}[t_k](Z^{t_{k}}_{\tau_{k+1}}+Z^{t_{k+1}}_{\tau_{k+1}}) +\xi^{t_{k}}_{\tau_k}
\label{eq:algebraic}
\end{equation}
\begin{figure}
	\vspace{- 3cm}
	\includegraphics[width=0.4\textwidth]{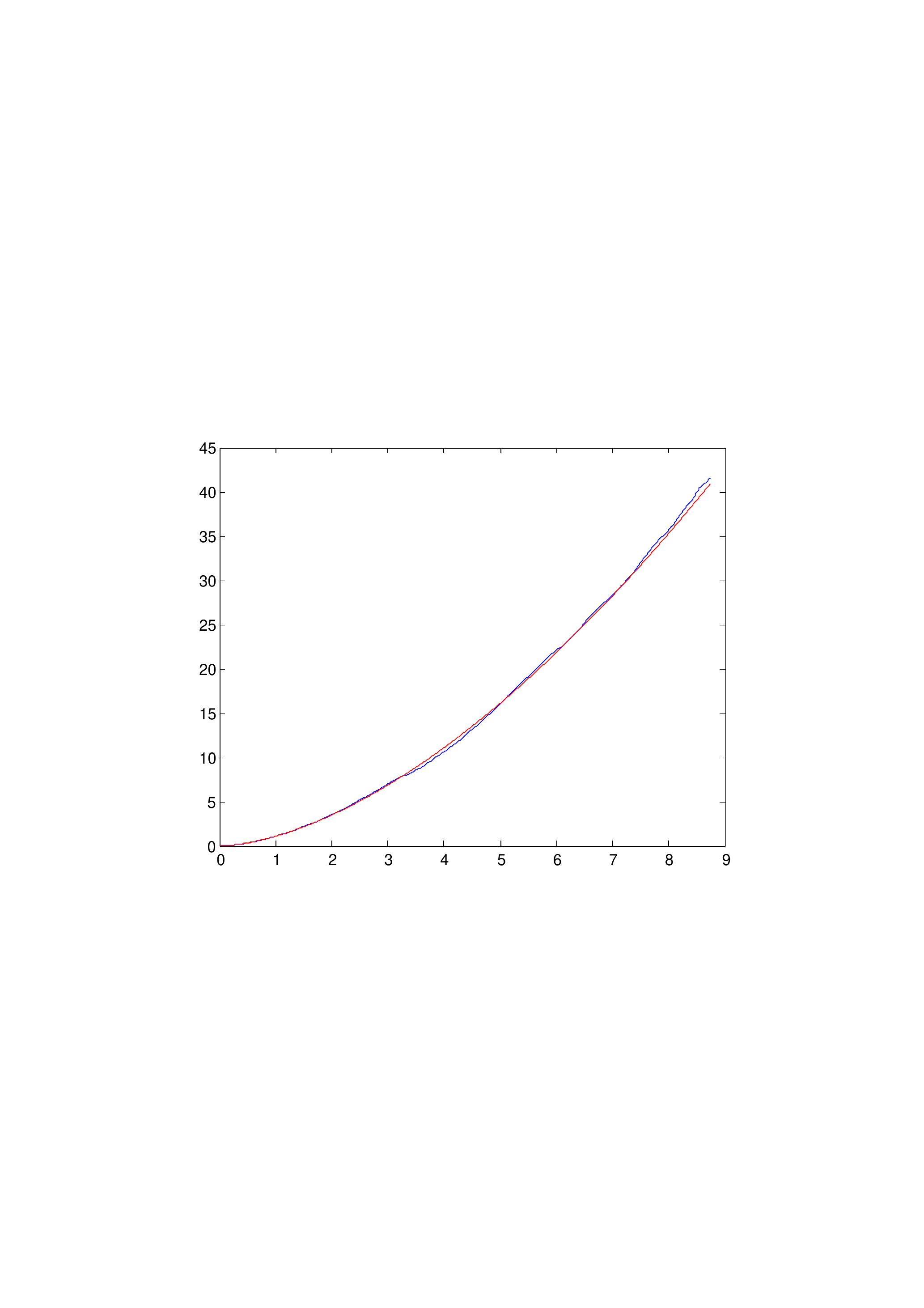}
    \vspace{- 3cm}
	\caption{A fit of the correlation $E(B_t^2)$ for fractional processes generated by a MERA. The correlation is seen to grow as $|t|^{1.66}$.}
	\label{fig:fit}
\end{figure}

 using the additional normally distributed random variable $\xi_{\tau_k}\propto\mathcal{N}(0,\sigma_{\tau_k})$

If we  combine and trace the vector in equation (\ref{eq:algebraic}) with variables $(c_1,c_2)$, we derive a discretized renormalization flow equation,
\begin{align*}
\sum_{j=N_1}^{N_2}c_jY^{t_{j}}_{\tau_k}&= \sum_{j=N_1}^{N_2}c_j \xi^{t_{j}}_{\tau_k}+m_{\tau_k}^{\tau_{k+1}} \sum_{j=N_1+1}^{N_2}(c_{j-1}+c_{j})Y^{t_{j}}_{\tau_{k+1}}\\&+ m_{\tau_k}^{\tau_{k+1}}\left(c_{t_{N_1}}Y^{t_{N_1}}_{\tau_{k+1}}+c_{t_{N_2+1}}Y^{t_{N_2}}_{\tau_{k+1}}\right)
\end{align*}
As we go higher up the tree, we replace the sum of the variables on each branch by a sum over the local fluctuactions and we renormalize the terms which are connected by a higher branch.

For example, we readily derive that for $H\not=1/2$ ,
\begin{align*}
B^{\mbox{\scriptsize fm}}_T \approx  \alpha_{H}\sigma\sum_{n=0}^{N_{\infty}}m_{\tau_0}^{\tau_{n}} \sum_{k=0}^{N+n}\left(\theta\left(a(k \leq \ceil[\Big]{\frac{n}{2}}\right)\left(\begin{array}{c}n\\ k\end{array}\right)\right.\\
\left.+\theta\left(k \geq N+\ceil[\Big]{\frac{n}{2}}+1\right)\left(\begin{array}{c}n\\ k-\left(N+\ceil[\Big]{\frac{n}{2}}+1\right)\end{array}\right)\right.\\
\left. \theta\left(\ceil[\Big]{\frac{n}{2}}+1\leq k \leq N+\ceil[\Big]{\frac{n}{2}}\right)\left(\begin{array}{c}n\\ k\end{array}\right)\right) \xi_{t_k}^{\tau_n} 
\end{align*}
We denoted $N_{\infty}$ as the debt of the circuit. Naturally as proven this should be as large as possible, $N_{\infty}\to \infty$. It turns out that such finite debt circuit are related to so-called Matrix Product States which we discuss in the next section.
Unfortunately, the circuit (\ref{fig:CIRCUIT}) presented earlier is too slow, $O(N^4)$.
It seems however that MERA appears as a powerful tool. The main feature of MERA is the renormalization of the real space into the virtual space. This geometry reduces the complexity to the order of $O(N\log N)$. A basic method for simulating gaussian processes is by Cholesky decomposition which is of the order $O(N^3)$. Other more powerful methods scale as $O(N^2)$. 
In figure (\ref{fig:path_incr}), we have plot the path and increments of a fractional motion. We calibrated the MERA by approximating the covariant matrix of the process. In figure (\ref{fig:fit}), we have plotted a fit of the correlation $E(B_t^2)$. This correlation is evaluated by Monte Carlo and generating the process with MERA.

\begin{figure}
	\includegraphics[width=0.4\textwidth]{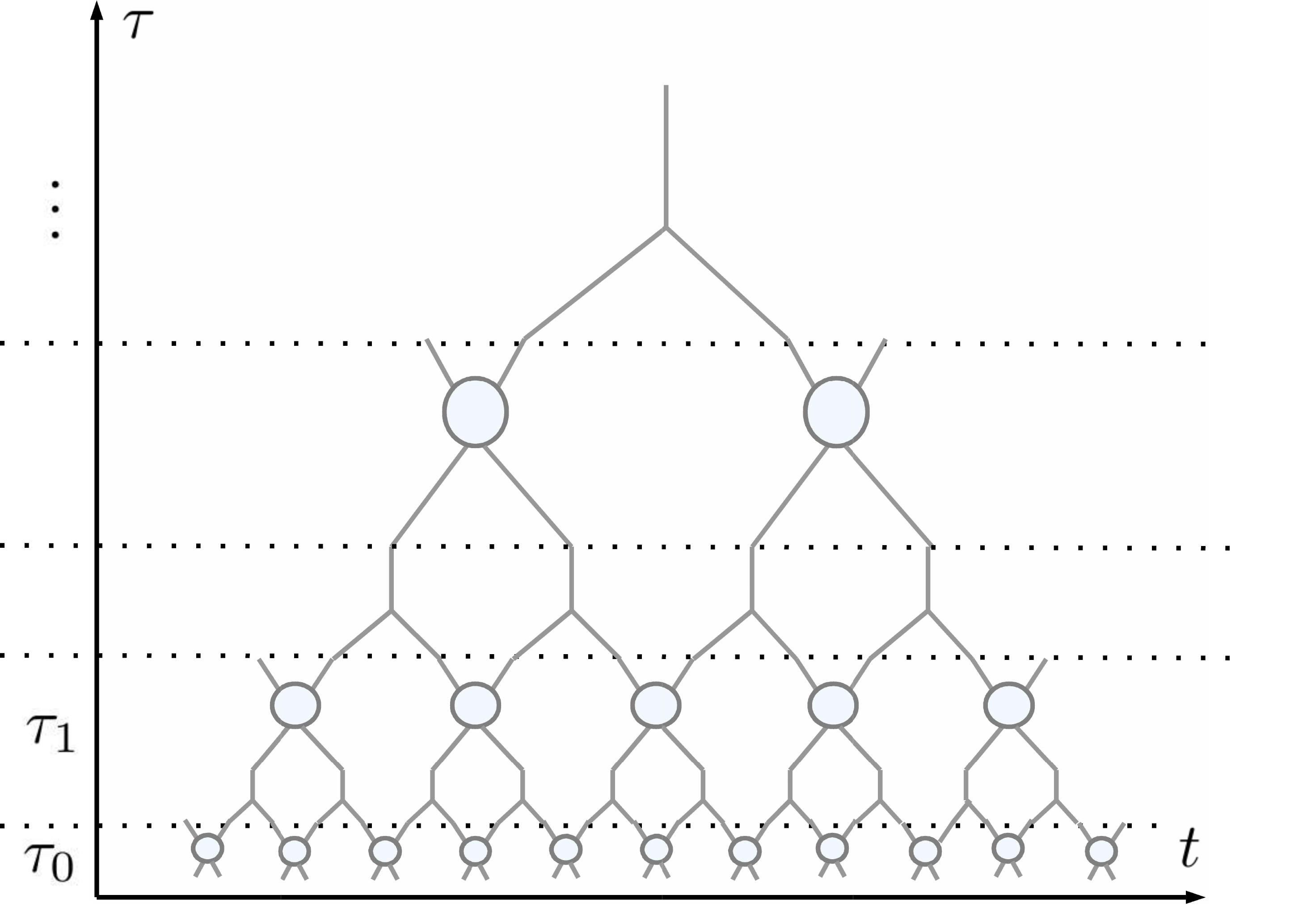}
	\caption{The multi-scale entanglement renormalization ansatz. The end of the circuit represents a quantum states, and in our paper the joint probability distribution of a process.}
	\label{fig:MERA}
\end{figure}
\subsection{Matrix Product State Representation}
Matrix Product States were originally understood as the ansatz for the density renormalization group algorithm \cite{White_1992_DMG}. The construction of these states has appeared, disappeared and reappeared many times through history under many different names such as finitely correlated states, Tensor trains, (complex) (quantum) hidden markov chains, in many different fields such as data science, quantum physics, statistical mechanics,... By consequence, we will focus on the form of interest for this paper. For a gaussian process $B_{T=t_N} = \sum_{j=1}^N X_{t_j}$, one may want to rewrite the joint probability distribution as follows,
\begin{align}
\label{eq:MPS}
&P(X_{t_1}=x_1,...,P(X_{t_N}=x_N))\nonumber \\&= \int_{\mathbb{R}^N}d\vec{u} A^{(x_2)}(u_2,u_3)...A^{(x_{N-1})}(u_{N-1},u_N)A^{(x_{N})}(u_N)
\end{align}

The Matrix Product State tensor $A^{(x_j)}(u_{j},u_{j+1})$ is graphically represented in figure (\ref{eq:MPS}). If we decide to cut the circuit up to a height $N_{\infty}<\infty$, this circuit can be represented by a Matrix Product State as shown in figure (\ref{fig:MERAapproxMPS}).
It only rests us to precisely evaluate how the debt of the circuit $N_{\infty}$ and the sample size are related to some error $\delta$. The sample size should imply an error of at leasy the order $O(1/N)$. Furthermore, error due debt the circuit depends on the affine parameter $m_{\tau_0}^{\tau_k}$ which decays as $O((1/\tau)^{1-H})$. The quantification of the error is tricky. Ideally, we should convergence look at the convergence in the $1$-norm $\|.\|_1$ of the distributions. However, this is analytically not feasible. We could instead compare the covariance matrices. It seems the easiest to study the following error $\delta$,
\begin{align}
\delta = \max_{0\leq s\leq t} \Big| E\left(B^{\mbox{\scriptsize fm}}_s B^{\mbox{\scriptsize fm}}_t\right)-\sum_{n_1,n_2=0}^{N_1,N_2}E\left(X_{t_{n_1}}X_{t_{n_2}}\right)\Big|
\label{eq:error}
\end{align}
with $t_{N_1},t_{N_2}\to s,t$.
The covariance of a gaussian process with $N$ increments consists of at most $N^2$. Clearly the circuit (\ref{fig:MERAapproxMPS}) is too "deep" as it contains at least $N^4$ parameters in some approximation. However, MERA suggest a reduction to $O(N \log N)$ parameters is possible.

\begin{figure}[t!]
	\includegraphics[width=0.4\textwidth]{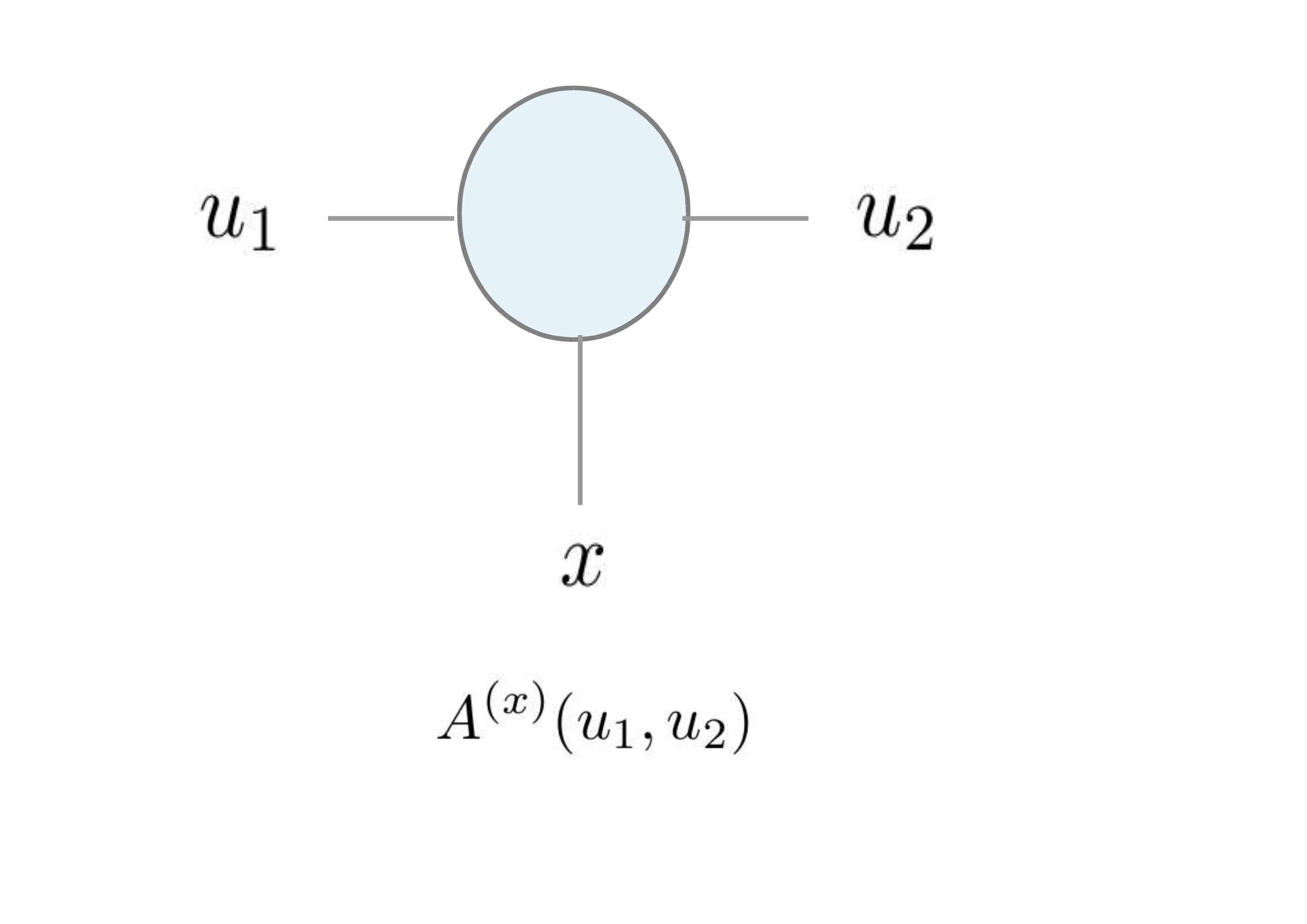}
	\caption{Matrix Product States tensor  }
	\label{fig:MPStensor}
\end{figure}
\section{Multifractal Properties}

Fractional Brownian motion is called unifractal. This property is coupled with  the Hurst index $H$,
$$B^{\mbox{\scriptsize fm}}(a t))\disteq a^{H}B^{\mbox{\scriptsize fm}}(t))$$

A more general property is multifractality. Rather than satisfying a global scaling with a unique affine parameter, there could be a distribution of many local scaling,
\begin{align}
|X(t+a\Delta t))-X(t)|\disteq M(a)|X(t+a\Delta t))-X(t)|
\label{eq:multifractality}
\end{align}
This time, however, $M(a)$ is a positive random variable which only depends on $a$ and not $t$.
This is achieved from our construction by the introduction a randomization in the Hurst index $H$ inside the now-random variable $m_{\tau_k}^{\tau_{k+1}}$, 
\begin{align}
m_{\tau_k}^{\tau_{k+1}} = \gamma_k^{k+1}\exp\left(\log M(\tau_{k+1})-\log M(\tau_{k})\right)
\label{eq:multifracCOND1}
\end{align}
The one-parameter random variable $M(\tau)$ satisfies the additional multiplicative property,
\begin{align}
M(a \tau) \disteq M_1(a) M_2(\tau)
\label{eq:multifracCOND2}
\end{align}
with $M_1$ and $M_2$ independent random variables. More details and examples are given in the appendix (\ref{app:multifrac}). We can use the structure (\ref{fig:CIRCUIT}) to derive the following extension of the previous theorem to multifractal processes.
\begin{figure}[t!]
	\includegraphics[width=0.4\textwidth]{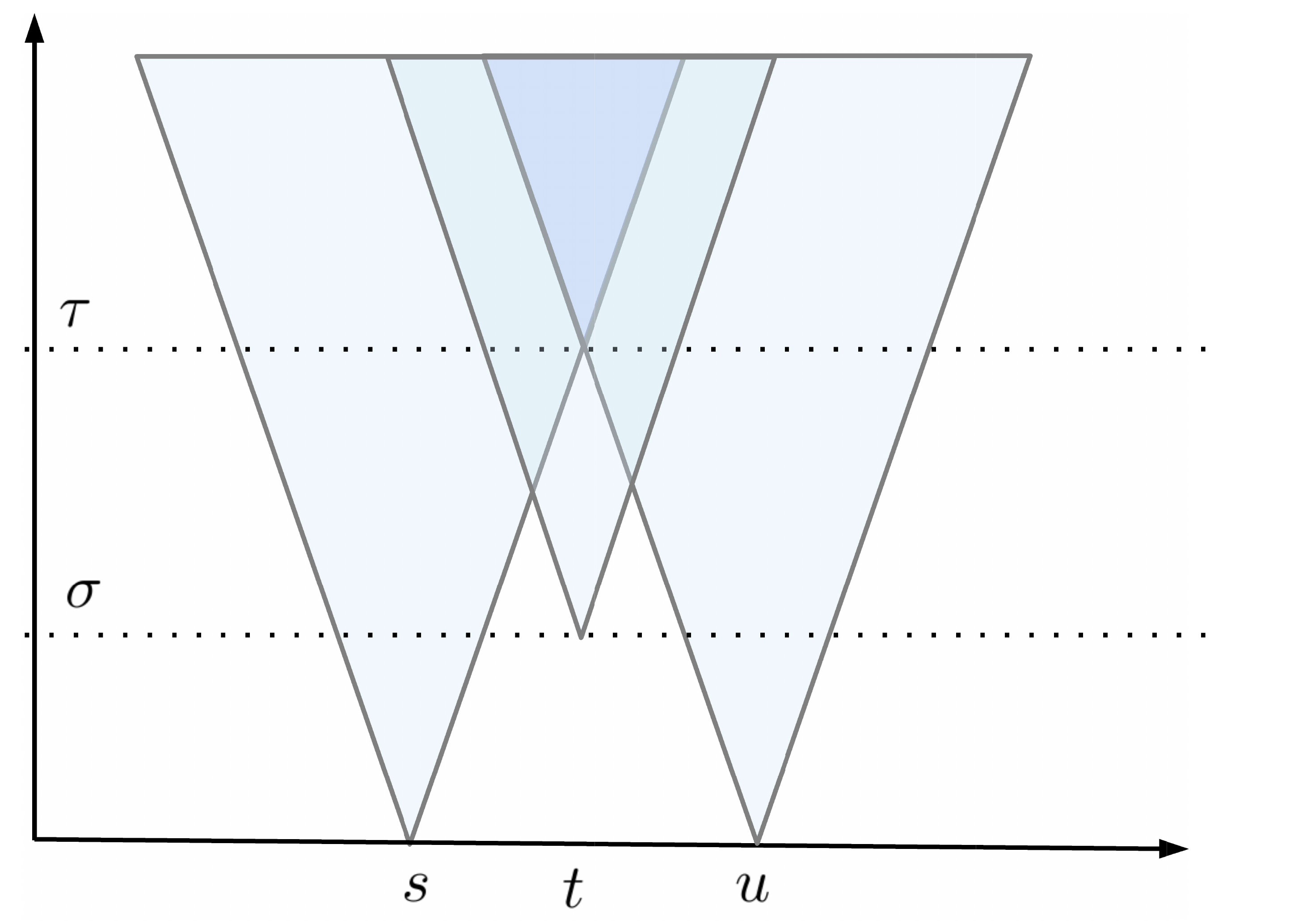}
	\caption{The increments $X_{s}$ and $X_{u}$ depend solely on random variables inside lightcones starting respectively at $s$ and $u$. Their correlation $g(\tau)$ is then determined by the overlap of the lightcones at $(t,\tau)$. Per construction, the correlation sastifies a renormalization property as $g(\tau)=\left(\frac{\tau^2}{\sigma^2}\right)^{2H-2}g(\sigma)$.}
	\label{fig:lightcone}
\end{figure}
\begin{theorem}
The joint probability distribution of the process $X(t) = \sum_{j=1}^N X_{t_j}$  constructed using $m_{\tau_k}^{\tau_{k+1}}$ as given by equation (\ref{eq:multifracCOND1}) and with random variables $M_j(\tau)$ satisfying property (\ref{eq:multifracCOND2}) implies the local scaling (\ref{eq:multifractality}) and multifractality,
	$$X(at) \disteq M(a) X(t) $$
\end{theorem}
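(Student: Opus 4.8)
The plan is to follow the architecture of the proof of the unifractal theorem, but to track the full distributional scaling instead of only the two-point function, since once the Hurst index is randomized the process $X(t)$ is no longer Gaussian and is not determined by its covariance alone. The starting point is the algebraic form of the transition tensor in (\ref{eq:algebraic}), iterated up the tree. Unfolding the recursion expresses $X(t)=\sum_j X_{t_j}$ as a linear functional of the independent Gaussian fluctuations $\xi^{t_k}_{\tau_n}$, exactly as in the explicit expansion written before (\ref{eq:MPS}), with each fluctuation weighted by the cumulative affine parameter $m_{\tau_0}^{\tau_n}=\prod_{k} m_{\tau_k}^{\tau_{k+1}}$ collected along its path and by the binomial light-cone overlap coefficients. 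Because the variances $\sigma_{\tau_k}$ are constant in $k$, the family $\{\xi^{t_k}_{\tau_n}\}$ is exchangeable and scale-invariant, so the entire $t$-dependence of the scaling will be carried by the affine factors alone.

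First I would compute the cumulative affine parameter. Substituting (\ref{eq:multifracCOND1}), the product telescopes,
\begin{align*}
m_{\tau_0}^{\tau_n}=\Big(\prod_{k}\gamma_k^{k+1}\Big)\frac{M(\tau_n)}{M(\tau_0)},
\end{align*}
so that the only random, scale-carrying object entering $X(t)$ through level $n$ is the ratio $M(\tau_n)/M(\tau_0)$, where $\tau_0$ is the fixed lower regulator and $\tau_n=\epsilon\sqrt{n}$ runs up to the coarsest scale reached by the light cones emanating from $[0,t]$.

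Next I would invoke the self-similarity of the lattice. Scaling $t\mapsto at$ rescales the virtual axis (equivalently, appends renormalization levels at the top of the tree) while mapping the light-cone geometry, and hence the binomial overlap coefficients, onto itself; this is precisely the renormalization identity $g(\tau)=(\tau^2/\sigma^2)^{2H-2}g(\sigma)$ recorded in figure (\ref{fig:lightcone}). Under this map the Gaussian fluctuations are reproduced in distribution and the cumulative factor acquires a rescaled argument $M(\,\cdot\,\tau_n)/M(\tau_0)$. Invoking the multiplicative property (\ref{eq:multifracCOND2}), $M(a\tau)\disteq M_1(a)M_2(\tau)$ with $M_1\perp M_2$ and $M_2\disteq M$, the $a$-dependent factor $M_1(a)$ pulls out in front of the entire linear functional, while the surviving weights $M_2(\tau_n)/M(\tau_0)$ on the (distributionally unchanged) noise reconstruct $X(t)$. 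Setting $M(a):=M_1(a)$, which is independent of that remainder, yields $X(at)\disteq M(a)X(t)$. The local scaling (\ref{eq:multifractality}) then follows by applying the same argument to the increment over $[t,t+a\Delta t]$ and using the stationarity of increments built into the construction.

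The main obstacle is the independence bookkeeping in the last step: one must verify that, after factoring $M_1(a)$, the surviving weights $M_2(\tau_n)/M(\tau_0)$ are jointly independent of $M_1(a)$ across all levels \emph{simultaneously} (not merely level by level) and that they assemble into exactly the law of $X(t)$, which requires (\ref{eq:multifracCOND2}) to hold consistently for the whole $\tau$-indexed family rather than at a single scale. A secondary difficulty is controlling the regulator: the cutoff $m_{\tau_0}^{\tau_1}$ and the $\epsilon$-dependent lower boundary must be shown to rescale compatibly and to drop out in the joint limit $N\to\infty$, $\epsilon\to 0$, so that the scaling relation survives the continuum limit.
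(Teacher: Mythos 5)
Your proposal is essentially correct in strategy but follows a genuinely different route from the paper. The paper's proof (main text and appendix) never attempts the distributional identity directly: it first conditions on the realization of $M(\cdot)$, observes that the process is then Gaussian so that all moments are determined by the conditional second moment, computes that second moment via the same light-cone/binomial-sum and Stirling/Gamma-function machinery as in the fBm theorem (yielding an integral of $\exp(-\gamma^{-1})\gamma^{-2}M(\gamma|u_1-u_2|)|u_1-u_2|^{-2}$), and only then averages over $M$ and invokes the multiplicative property to factor the moments, concluding the weaker statement $E\left(Y^m_{cT}\right)=E\left(M^m_c\right)E\left(Y^m_T\right)$. You instead telescope the affine parameters into the ratio $M(\tau_n)/M(\tau_0)$ inside the explicit noise representation and pull the common factor $M_1(a)$ out of the entire linear functional, which, if it works, delivers the full equality in distribution $X(at)\disteq M(a)X(t)$ rather than only moment scaling. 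What your route buys is a stronger and more transparent conclusion; what it costs is exactly the obstacle you flag yourself: property (\ref{eq:multifracCOND2}) is stated scale by scale, and you need a single $M_1(a)$ that factors \emph{jointly} out of the whole $\tau$-indexed family $\{M(a\tau_n)\}_n$ while remaining independent of the residual family $\{M_2(\tau_n)\}_n$ --- this is the defining cascade structure of a self-similar measure (cf.\ the appendix discussion of (\ref{def:prop_self_similar_1})--(\ref{def:prop_self_similar_2})), not a consequence of the one-scale statement. It is only fair to note that the paper's own moment computation quietly relies on the same joint structure when it writes $E\left(\prod_j M(\gamma_j)\right)E\left(\prod_j M(|u_{2j-1}-u_{2j}|)\right)$, so neither argument escapes that bookkeeping; your version simply makes the requirement explicit. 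Your closing remarks on the regulator $m_{\tau_0}^{\tau_1}$ and the continuum limit identify a genuine loose end that the paper also leaves informal.
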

\begin{proof}
	Similarly to the proof for the fractional Brownian motion, it is sufficient to find scaling on the level of the correlations $E\left(X_{t_k}X_{t_l}\right)$.
	The key intuition is illustrated in figure (\ref{fig:lightcone}). Both $X_{s=t_k}$ and $X_{u=t_l}$ depend on random variable inside the light cones $s$ and $u$ respectively. Hence the correlations is determined by the random variable inside their intersection which is the light cone starting at $(t=|u-s|/2,\tau)$. In other words these random variable determine a new random variable $X_{t,\tau}$. Scaling is implied if,
	$$X_{t,a\tau} \disteq M(a) X_{t,\tau}  $$
	This is precisely implied by construction of $m_{\tau_k}^{\tau_{k+1}}$ and $M$ in equations (\ref{eq:multifracCOND1}) and (\ref{eq:multifracCOND2}). Further technical details can be found in the appendix (\ref{app:multifrac}).
\end{proof}

\section{Conclusion and Further Directions}
In this work, we constructed the joint probability measure of the increments of fractional brownian motion using the framework of tensor network. This insight presented us on one hand with a new sampling method of fractional Brownian motion using the already known MERA, used in the study of quantum critical systems. Secondly, we show that such circuits present a novel pictorial representation of multifractal processes. In such representations, the connection between multifractality and renormalization emerges naturally.
This network representation also presents a new insight of the multifractal processes. In the language of networks, the Hurst index is not unique anymore but the value on the different levels of the circuit is sampled from a self-similar measure.

\section*{Acknowledgements}
We thank Jutho Haegeman for helpful discussions.
We acknowledge financial support by the FWF project CoQuS No.  W1
210N1 and project QUTE No. H20ERC2015000801.

\bibliographystyle{plain}

\begin{widetext}
\newpage
\appendix
\section{Fractional Brownian Motion}
\label{app:fbm}
\begin{theorem*}
	Given the join probability distribution
	\begin{align}
	P(\vec{X}_{t_j,\tau_0})=\sum_{j_1,...,j_{N_{\infty}}}P(\vec{X}_{t_{j_0},\tau_0}|\vec{X}_{t_{j_1}},\tau_1)P(\vec{X}_{t_{j_1}},\tau_1|\vec{X}_{t_{j_2}},\tau_2)...P(\vec{X}_{t_{j_{N_\infty-1}}},\tau_{N_{\infty-1}}|\vec{X}_{t_{j_{N_\infty}}},\tau_{N_{\infty}})
	\label{eq:joint_distr}
	\end{align}
	which is constructed from the network whose structure is pictured in figure (\ref{fig:MERAapproxMPS})
	\begin{align}
	P(\vec{X}_{t_{j_k},\tau_k}|\vec{X}_{t_{j_{k+1}}},\tau_{k+1})=\prod_l P(X_{t_l},\tau_k|X_{t_l},X_{t_{l+1}},\tau_{k+1})
	\label{eq:transfer}
	\end{align}
	with the transfer operations given by,
	\begin{align}
	P(X_{l,\tau_k}|X_{t_l,\tau_{k+1}}=z_1,X_{t_{l+1},\tau_{k+1}}=z_2)\propto  \mathcal{N}\left(y; m_{\tau_k}^{\tau_{k+1}} (z_1+z_2), \sigma^{t_k}_{\tau_{k}} \right) 
	\label{eq:tree_transition_append}
	\end{align}
	The parameters are taken to be,
	$$t_n=\epsilon n,~\tau_n = \sqrt{n \delta},~\delta = \epsilon^2,~m_{\tau_j}^{\tau_k}= \left(\frac{\tau_k}{\tau_j}\right)^{H-1}\frac{\left[ k\left(\begin{array}{c}2k \\k\end{array}\right)\right]^{-1/2}}{\left[ j\left(\begin{array}{c}2j \\j\end{array}\right)\right]^{-1/2} },~~\sigma^{t_k}_{\tau_{k}}=\frac{1}{2}\sigma \sqrt{H|2H-1|\Gamma(1-H)^{-1}} $$
	The variance $\sigma^{t_k}_{\tau_{k}}$ satisfies the boundary condition,
	\begin{align*}
	\sigma_{\tau_0}^{t_0} = \left\{\begin{array}{cc} \sqrt{\epsilon}\sigma & H = 1/2 \\
	\frac{1}{2}\sigma \sqrt{H|2H-1|\Gamma(1-H)^{-1}}   & \mbox{ else} \end{array}\right.
	\end{align*}
	 In the limit $N\to \infty$ and $t_N \to T$, the process $B^{\mbox{\scriptsize fm}}_T = \sum_j X_{t_j}$ with joint probalility distribution $P(X_{t_1},...,X_{t_N})$, is then a fractional brownian motion with Hurst index $H$, which satisfies,
	$$ E\left(B_{t_{N_1}}B_{t_{N_2}}\right)=\frac{1}{2}\sigma^2\left( t_{N_1}^{2H}+t_{N_2}^{2H}-|t_{N_1}-t_{N_2}|^{2H}\right)$$
\end{theorem*}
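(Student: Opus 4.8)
The plan is to exploit that the whole construction is a zero-mean Gaussian field, so the joint law of $\left(B^{\mbox{\scriptsize fm}}_{t_{N_1}},B^{\mbox{\scriptsize fm}}_{t_{N_2}}\right)$ is pinned down entirely by the two-point function. It therefore suffices to show that $E\left(B_{t_{N_1}}B_{t_{N_2}}\right)$ converges, as $\epsilon\to 0$, to the stated fractional covariance. Since $B^{\mbox{\scriptsize fm}}_{t_N}=\sum_j X_{t_j}$ and $E(B_sB_t)=\sum_{j,k}E\left(X_{t_j}X_{t_k}\right)$, the entire problem collapses to computing the increment–increment correlations $E\left(X_{t_j}X_{t_k}\right)$ and recognizing the resulting Riemann sum as the double integral in the stated identity $\frac{1}{2}\left(t^{2H}+s^{2H}-|t-s|^{2H}\right)=H(2H-1)\int_0^t\!\int_0^s|u_1-u_2|^{2H-2}du_1du_2$.

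First I would solve the transfer recursion in its algebraic form (\ref{eq:algebraic}) from the top of the circuit downward, writing each output increment $X_{t_l}=X_{t_l,\tau_0}$ as an explicit linear combination of the independent noises $\xi^{t_{l'}}_{\tau_n}$ injected at every level. Because each transfer step replaces a variable at $(t_l,\tau_k)$ by the affine combination of its two parents at $(t_l,\tau_{k+1})$ and $(t_{l+1},\tau_{k+1})$, unrolling $n$ levels produces a Pascal/light-cone weight: the coefficient of a noise sitting $n$ levels up and offset by $k$ sites is $m_{\tau_0}^{\tau_n}\binom{n}{k}$. The composition property built into $m_{\tau_j}^{\tau_k}$ is precisely what makes the path product telescope to the single factor $m_{\tau_0}^{\tau_n}\propto \tau_n^{H-1}\left[n\binom{2n}{n}\right]^{-1/2}$.

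Next, since distinct noises are independent, $E\left(X_{t_l}X_{t_{l'}}\right)$ collects only contributions from noises lying in the intersection of the two light cones, as in figure (\ref{fig:lightcone}). Summing over the shared level-$n$ noises gives a Vandermonde convolution $\sum_k\binom{n}{k}\binom{n}{k+d}=\binom{2n}{n+d}$ with $d=l-l'$, so that $E\left(X_{t_l}X_{t_{l'}}\right)=\sigma_{\tau}^2\sum_n\left(m_{\tau_0}^{\tau_n}\right)^2\binom{2n}{n+d}$. Here the normalization $\gamma_n$, namely the factor $\left[n\binom{2n}{n}\right]^{-1/2}$, is tuned exactly to cancel the central binomial coefficient, leaving $\left(m_{\tau_0}^{\tau_n}\right)^2\binom{2n}{n}\sim \tau_n^{2(H-1)}/n$. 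I would then invoke the local central limit estimate $\binom{2n}{n+d}/\binom{2n}{n}\approx e^{-d^2/n}$ and pass the sum over $n$ to an integral; with $\tau_n=\sqrt{n}\,\epsilon$ and $d=(t_j-t_k)/\epsilon$ this becomes a Gamma-function integral $\int_0^\infty x^{H-2}e^{-d^2/x}dx=\Gamma(1-H)\,|d|^{2(H-1)}$, whose factor $\Gamma(1-H)$ cancels the $\Gamma(1-H)^{-1}$ hidden inside $\sigma_{\tau_k}$ and whose prefactors assemble into the claimed $E\left(X_{t_j}X_{t_k}\right)\approx \sigma\epsilon^2 H(2H-1)|t_j-t_k|^{2H-2}$. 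Substituting into the double sum and reading it as a Riemann sum of mesh $\epsilon$ then yields $H(2H-1)\int_0^t\!\int_0^s|u_1-u_2|^{2H-2}du_1du_2$, which by the stated identity equals $\frac{1}{2}\sigma^2\left(t^{2H}+s^{2H}-|t-s|^{2H}\right)$.

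The hard part will be the continuum limit rather than the algebra. For $H<1/2$ one has $2H-2<-1$, so $|u_1-u_2|^{2H-2}$ is non-integrable across the diagonal and the double integral makes sense only in a regularized (Hadamard) sense; correspondingly the diagonal terms $j=k$ and the near-diagonal terms must be treated separately, and this is exactly the role of the modified boundary variance $\sigma_{\tau_0}$ and of the lower cutoff at $\epsilon^{1/|1-H|}$ appearing in $m_{\tau_0}^{\tau_1}$, which regularize the small-$\tau$ end of the integral. I would therefore need to verify that the local central limit approximation is uniform enough that the error terms vanish in the joint limit $N\to\infty$, $\epsilon\to 0$, $t_N\to T$, and that the regularization reproduces precisely the constant demanded by the identity. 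Controlling this uniformity near the singularity, and confirming that the cutoff contributes the correct boundary constant rather than a spurious one, is where the genuine difficulty lies.
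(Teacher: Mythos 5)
Your proposal follows essentially the same route as the paper's appendix proof: reduce to the two-point function by Gaussianity, unroll the transfer recursion into binomial light-cone weights, apply the Vandermonde convolution and the local-CLT estimate $\binom{2q}{q-n}\binom{2q}{q}^{-1}\approx e^{-n^2/q}$, recognize the resulting sum over levels as a $\Gamma(1-H)$ integral cancelling the normalization in $\sigma_{\tau_k}$, and pass the double sum to the integral identity for the fBm covariance. The only difference is that you explicitly flag the unresolved analytic issues (uniformity of the local CLT and the non-integrable diagonal for $H<1/2$), which the paper's own proof also leaves untreated.
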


\begin{proof}
	The calculations are more insightful if we keep figure (\ref{fig:lightcone}) and the renormalization flow equation (\ref{eq:algebraic}) in mind. The correlation is then determined by the overlap of the lightcones of $X_{t_k}$ and $X_{t_l}$, which is another lightcone at $(|t_k -t_l|/2,\tau_n=\sqrt{|t_k -t_l|\epsilon/2})$.
	One can check that,
	\begin{align}
	E(X_{t_k}X_{t_l}) & \propto \frac{1}{2}\sigma^2 \sum_{q=n}^{N_{\infty}}\left(m_{\tau_0}^{\tau_q}\right)^2\sum_{j=n}^q\left(\begin{array}{c}q \\j\end{array}\right)\left(\begin{array}{c}q \\j-n\end{array}\right)\nonumber \\
	&= \frac{1}{2}\sigma^2 \epsilon^2 \sum_{q=n}^{N_{\infty}}\tau_{q}^{2H-2} \left[\sum_{j=n}^q\left(\begin{array}{c}q \\j\end{array}\right)\left(\begin{array}{c}q \\j-n\end{array}\right)\right]\left[ q\left(\begin{array}{c}2q \\q\end{array}\right)\right]^{-1}\label{eq:sum_corr}
	\end{align}
Using Vandermonde Convolution's identity and Stirling's formula, we can approximate the binomial coefficients,
	$$ \sum_{j=n}^{q} \left(\begin{array}{c}q \\j\end{array}\right)\left(\begin{array}{c}q \\j-n\end{array}\right) = \left(\begin{array}{c}2q \\n\end{array}\right),~~\left(\begin{array}{c}2q \\q-n\end{array}\right)\left(\begin{array}{c}2q \\q\end{array}\right)^{-1}\approx e^{-n^2/q}$$
	
	Introducing a rescaling $\gamma(q)=q/n^2$, and using the identity followed by approximation above simplifies the equation (\ref{eq:sum_corr}),
	\begin{align*}
	\sum_{q=n}^{n^2} \tau_{n}^{2H-2} \left(\begin{array}{c}2q\\ q - n \end{array}\right) \left(\begin{array}{c}2q\\ q \end{array}\right)^{-1}=\left(\sum_{q=n}^{N_{\infty}}n^{-2}\exp\left(-\gamma(q)^{-1}\right)\gamma(q)^{H-2}\right)t_{n}^{2H-2}
	\end{align*}
	In the limits $n^2/N_{\infty} \to 0$ and $n\to \infty$, the expression between brackets converges to the Gamma function,
	$$\sum_{q=n}^{N_{\infty}}n^{-2}\exp\left(-\gamma(q)^{-1}\right)\gamma(q)^{H-2}\approx \int_{n^2/N_{\infty}}^ndu\ e^{-u}u^{-H}\to \Gamma(1-H)$$
Combining the results yields the correlation,
\begin{align*}
E(X_{t_k}X_{t_l}) &\propto\frac{1}{2} \sigma^2 \epsilon^2 t_{|k-l|}^{2H-2}
\end{align*}
For large $N$ and small $\epsilon$, we can approximate the double sum by a double integral,
$$\sum_{j_1=1}^{N_1}\sum_{j_2=1}^{N_2}\epsilon^2\tau_{|k-l|}^{2H-2} \approx \int_0^{t_{N_1}}du_1\int_0^{t_{N_2}}du_2 |u_1-u_2|^{2H-2} =\frac{1}{2} \sigma^2\left( t_{N_1}^{2H}+t_{N_2}^{2H}-|t_{N_1}-t_{N_2}|^{2H}\right)$$
from which the claim follows.
\end{proof}

\section{Multifractal process}
\label{app:multifrac}
A key component for the introduction of multifractal measures, are the so-called self-similar measures. A detailed introduction can be found in \cite{Mandelbrot_multifrc_asset,Riedi_multifrc}. Define the set $\mathcal{S}$ to consist of all similitude transformation, i.e. translation and homothetic transformations.
\begin{definition}
Given $\mu:[0,T]\to [0,1]$ a random measure, which satisfies,
\begin{enumerate}
	\item For all similitudes $S \in \mathcal{S}$, for any interval $I_1 \subset I_2$, the ratios,

\begin{align}
\label{def:prop_self_similar_1}
\frac{\mu(SI_1)}{\mu(SI_1)} \disteq \frac{\mu(I_1)}{\mu(I_1)}
\end{align}
are equal in distribution as long as $I_1,I_2, S I_1, SI_2 \subset [0,T]$.

	\item For all decreasing sequences of compact intervals $I_1\subset I_2 \subset... \subset I_n \subset [0,T]$, the ratios,
	
	\begin{align}
	\label{def:prop_self_similar_2}
	\frac{\mu(I_1)}{\mu(I_2)},...,	\frac{\mu(I_{n-1})}{\mu(I_n)}
	\end{align}
	are statistically independent.
\end{enumerate}
then, the measure $\mu$ is called self-similar.
\end{definition}
The first property (\ref{def:prop_self_similar_1}) implies the existence of a random variable $M$  such that,
\begin{align}
\label{def:prop_M_1}
\mu[0,ct]\disteq M(c)\mu[0,t],~~0 \leq ct,t \leq T 
\end{align}
From the second property (\ref{def:prop_self_similar_1}), we also derive that the random variable must satisfy a multiplicative property. Taking $0 \leq c_1,c_2\leq 1$ and $ 0\leq t \leq T$, 
$$\frac{\mu[0,c_1 c_2 t]}{\mu[0,t]}= \frac{\mu[0,c_1 c_2 t]}{\mu[0,c_2 t]}\frac{\mu[0,c_2 t]}{\mu[0,t]}$$
Hence, by the corrolary of the first property of self-similar measures (\ref{def:prop_M_1}), 
$$M(c_1c_2)\disteq M'(c_1)M''(c_2) $$
The second property (\ref{def:prop_self_similar_1}) implies that $M'$ and $M''$ are independent.
The existence of such random variable plays a central role when introducing multifractality.
Before jumping to the derivation of our result, we illustrate this property with two examples.
\begin{example}
	For $t>1$, define $M(t)$,
	$$ M(t) = \exp\left(\sigma B_{\log(t)}-\frac{\sigma^2}{2}\log(t) \right) $$
	with the brownian motion $B_{\log(t)}=\int_0^{\log(t)} dB_s$. From the indepence of the increments of Brownian motion, we readily derive,
	$$ M(c t) = \exp\left(\sigma B_{\log(c)}-\frac{\sigma^2}{2}\log(c) \right)  \exp\left(\sigma B_{\log(t)}-\frac{\sigma^2}{2}\log(t) \right)= M(c)M(t) $$
\end{example}
\begin{figure}[t!]
	\includegraphics[width=0.5\textwidth]{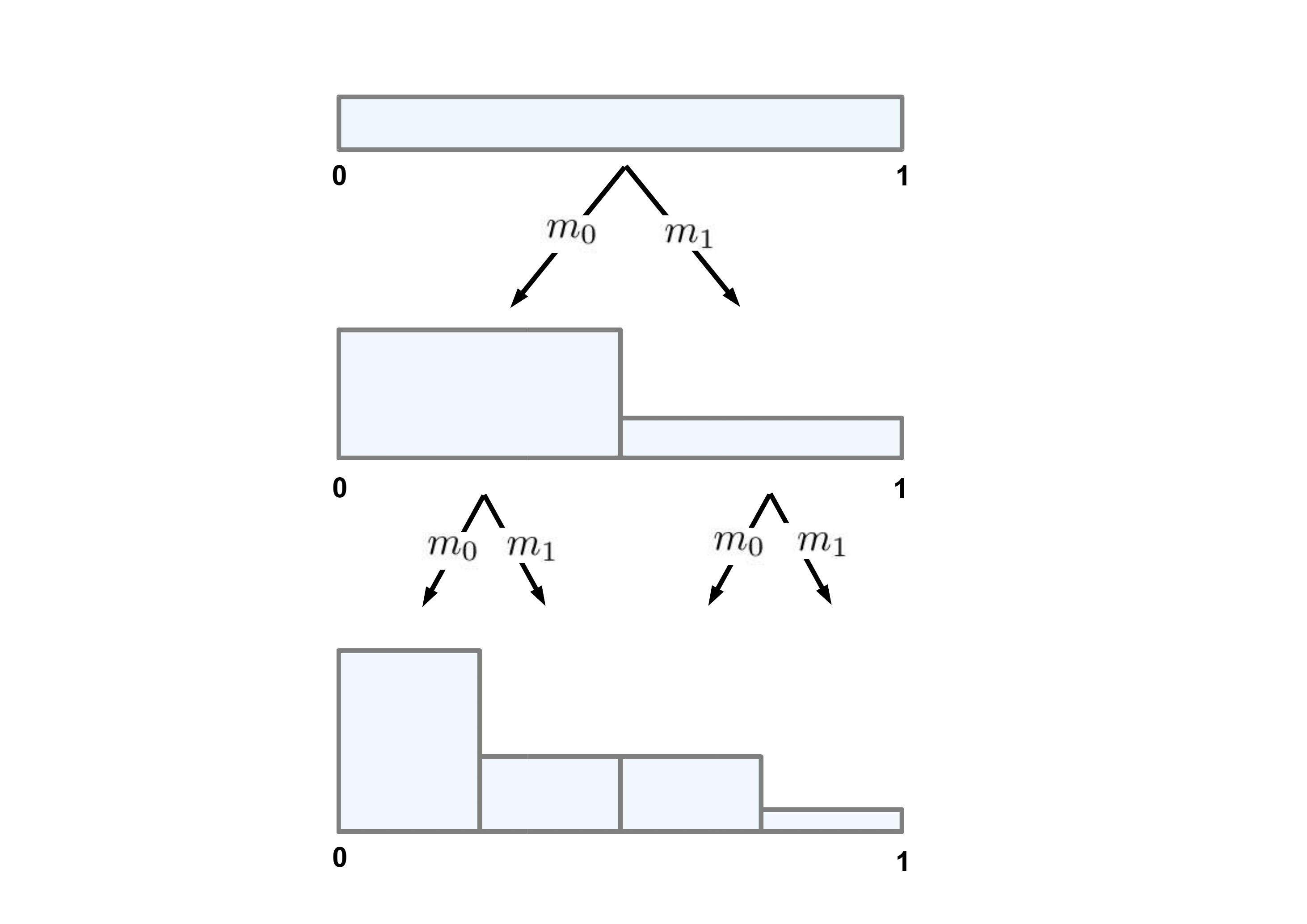}
	\caption{Illustration of the multiplicative cascade for constructing the binomial multifractal measure in example (\ref{ex:binomial}). In the second step, the masses of each cell are respectively $m_0^2, m_0 m_1, m_0 m_1$ and $m_1^2$.}
	\label{fig:cascade}
\end{figure}
\begin{example}[Binomial Measure]
	
	As we will see this example does not satisfy the properties for all $c,t$, but is, however, very insightful.
	The binomial measure is introduced as the limit of an elementary iterative procedure called a multiplicative cascade.
	As illustrated in figure (\ref{fig:cascade}), the idea is to iteratively divide the interval, for sake of simplicity $[0,1]$, into $b$-adic cells of length $1/b$. At each step, the mass is multiplied by a factor depending on the location of the cell.
	For example after $k$ steps, the mass of the cell $t=\sum_{j=1}^k \eta_j 2^{-j}$ with length $\Delta t = 2^{-k}$ is,
	$$\mu[t,t+\delta t] = M(\eta_1)M(\eta_1,\eta_2)...M(\eta_1,...,\eta_k)$$
	Additionally, we can choose each $M(\eta_1,...,\eta_k)$ to be independent.
	
	The most simple case is to choose a unique weight $m_0$. At each step of the iteraction one multiplies by $m_0$ if we take the left cell and $m_1=1-m_0$ for the right cell.
	By induction for $t_{n+1}=t_n + \eta_{n+1} 2^{-(n+1)}$ and $\Delta_n = 2^{n}$,
	$$\mu[t_{n+1},t_{n+1}+\Delta t_{n+1}] =\mu[t_{n},t_{n}+\Delta t_{n}] \left\{\begin{array}{cc}m_0 & \mbox{if } \eta =0 \\ m_1 & \mbox{if } \eta =1 \end{array}\right.$$
	Let us repeat the procedure $N>>1$ times,  and consider two dyadic numbers $c$ and $t$. One should see that by the self-similarity of the construction, the multiplicative property follows,
	$$\mu[ct,ct+\Delta t_{N}] = \mu[c,c+\Delta t_{N}] \mu[t,t+\Delta t_{N}]$$
	Hence, in this example we can construct,
	 $$M(t)=\lim_{N\to \infty}\mu[t,t+\Delta t_{N}]$$
	 
	More information about this construction can be found \cite{Evertszy_1992_Multifractal}.
	
\label{ex:binomial}
\end{example}
\begin{theorem}
	Given the join probabilty distribution with the same structure as in equation (\ref{eq:joint_distr}) and transfer operations (\ref{eq:transfer}) with parameters,
	\begin{align*}
	t_n=\epsilon n,~\tau_n = \sqrt{n \delta},~\delta = \epsilon^2,~m_{\tau_0}^{\tau_n}=\exp\left(\frac{1}{2}\log\left[\frac{ \tau_n^{-4}M(\tau_n)}{\epsilon}\right]\right)\left(\begin{array}{c}2n \\n\end{array}\right)^{-1/2}\\
	m_{\tau_k}^{\tau_l}=\exp\left(\frac{1}{2}\log\left[\frac{ \tau_l^{-4}M(\tau_l)}{ \tau_k^{-4}M(\tau_k)}\right]\right)\frac{\left[k\left(\begin{array}{c}2k \\k\end{array}\right)\right]^{-1/2}}{\left[j\left(\begin{array}{c}2j \\j\end{array}\right)\right]^{-1/2} },~~~~\sigma^{t_k}_{\tau_{k}}=\sigma
	\end{align*}
	The continuous one-parameter random variable $M(\tau_n)$ satisfies the multiplicative property,
	\begin{align}
	M(c\tau_n)\disteq M'(c)  M''(\tau_n)
	\label{eq:multiplicative}
	\end{align}
	where $M'(c)$ and $ M''(\tau_n)$ are independent.
	The process $Y_T = \sum_j X_{t_j}$ with joint probalility distribution $P(X_{t_1},...,X_{t_N})$, is then a multifractal process which satisfies the scaling property,
	$$ E\left(Y^m(cT)\right)= E\left(M^m(c)\right)E\left(Y^m(T)\right)$$
\end{theorem}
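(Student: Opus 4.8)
The plan is to reduce the moment identity $E\left(Y^m(cT)\right)=E\left(M^m(c)\right)E\left(Y^m(T)\right)$ to a single distributional scaling of the form $Y(cT)\disteq M(c)\,Y(T)$ in which the factor $M(c)$ is \emph{independent} of $Y(T)$; once that is established the moment factorization is immediate. The crucial structural fact is that the transfer operations (\ref{eq:tree_transition_append}) are centred Gaussian kernels, so conditionally on a fixed realization of the random field $M(\cdot)$ the whole increment vector $(X_{t_j})_j$ is a centred Gaussian. Writing $\mathrm{Var}\!\left(Y(T)\mid M\right)=\sum_{k,l}E\!\left(X_{t_k}X_{t_l}\mid M\right)$, Isserlis' theorem gives, for even $m$, $E\!\left(Y^m(T)\mid M\right)=(m-1)!!\,\mathrm{Var}\!\left(Y(T)\mid M\right)^{m/2}$, whence $E\!\left(Y^m(T)\right)=(m-1)!!\,E\!\left[\mathrm{Var}\!\left(Y(T)\mid M\right)^{m/2}\right]$. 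The entire statement therefore hinges on how the \emph{conditional} variance rescales when $T\mapsto cT$, and the odd moments vanish on both sides for free by centredness.

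Second, I would import the light-cone covariance computation from the proof of the fractional-Brownian-motion theorem in Appendix (\ref{app:fbm}), now with the $M$-dependent affine parameters. Exactly as there, the correlation of $X_{t_k}$ and $X_{t_l}$ is governed by the overlap of their two light cones, which is itself a light cone with apex at $(|t_k-t_l|/2,\tau_n)$, $n=|k-l|/2$, yielding
\begin{align}
E\!\left(X_{t_k}X_{t_l}\mid M\right)\propto\tfrac{1}{2}\sigma^2\sum_{q=n}^{N_\infty}\big(m_{\tau_0}^{\tau_q}\big)^2\sum_{j=n}^{q}\binom{q}{j}\binom{q}{j-n},
\label{eq:mf_corr}
\end{align}
with $\big(m_{\tau_0}^{\tau_q}\big)^2=\tau_q^{-4}M(\tau_q)\binom{2q}{q}^{-1}/\epsilon$. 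Thus the only place randomness enters the conditional covariance is \emph{linearly}, through the values $M(\tau_q)$ of the self-similar field sampled along the light cone.

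Third comes the rescaling argument, where the multiplicative property (\ref{eq:multiplicative}) does the work. Under $T\mapsto cT$ every separation rescales, $|t_k-t_l|\mapsto c\,|t_k-t_l|$, so the summation apex moves $n\mapsto cn$ and, since $\tau_n=\sqrt{n}\,\epsilon$, the light-cone variable rescales as $\tau\mapsto\sqrt{c}\,\tau$. Applying $M(\sqrt{c}\,\tau_q)\disteq M'(\sqrt{c})\,M''(\tau_q)$ to every term of (\ref{eq:mf_corr}) at once, and invoking the similitude-invariance (\ref{def:prop_self_similar_1}) to pull a single coarse factor $M'(\sqrt{c})$ out of the whole field, I obtain $\mathrm{Var}\!\left(Y(cT)\mid M\right)\disteq M'(\sqrt{c})\,\mathrm{Var}\!\left(Y(T)\mid M\right)$ with $M'(\sqrt{c})$ independent of the right-hand factor. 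Passing from the conditional variance to the law of $Y$ itself (taking square roots of variances against an independent standardized Gaussian) then gives $Y(cT)\disteq M(c)\,Y(T)$ with $M(c)=\sqrt{M'(\sqrt{c})}$ independent of $Y(T)$, which is the local scaling (\ref{eq:multifractality}); factoring the expectation of the $m$-th power through this independence yields $E\!\left(Y^m(cT)\right)=E\!\left(M^m(c)\right)E\!\left(Y^m(T)\right)$.

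The main obstacle is precisely this third step: justifying that a global time-rescaling acts on the entire random field $\{M(\tau_q)\}_{q\ge n}$ as multiplication by \emph{one} common, independent factor $M'(\sqrt{c})$, rather than pulling out a different correlated factor term-by-term in (\ref{eq:mf_corr}). This is exactly what the statistical-independence property (\ref{def:prop_self_similar_2}) together with similitude-invariance (\ref{def:prop_self_similar_1}) is designed to supply — the same mechanism behind the clean corollary $\mu[0,ct]\disteq M(c)\,\mu[0,t]$ with independent $M(c)$ — so the remaining work is to transcribe that factorization to the weighted light-cone sum and to verify it survives the $\epsilon\to0$, $N_\infty\to\infty$ limits uniformly, as in the fractional case.
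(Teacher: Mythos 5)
Your overall strategy coincides with the paper's up to the final step: condition on the realization of $M(\cdot)$, use the fact that the kernels (\ref{eq:tree_transition_append}) are centred Gaussians so that Isserlis/Wick reduces every moment to the conditional covariance, compute that covariance as the light-cone sum (\ref{eq:sum_corr_multfrac}) in which $M(\tau_q)$ enters linearly, and then invoke the multiplicative property. Where you diverge is at the end. The paper does not attempt the pathwise identity $Y(cT)\disteq M(c)\,Y(T)$ with a single independent factor; it converts the conditional covariance into an integral of the form $\kappa\int du_1\,du_2\int d\gamma\, e^{-1/\gamma}\gamma^{-2}M(\gamma|u_1-u_2|)\,|u_1-u_2|^{-2}$, expands $E(Y_T^{2m})$ by Wick as products of $m$ such integrals, and only then applies $M(\gamma s)\disteq M(\gamma)M(s)$ \emph{inside} the expectation over $M$, factor by factor. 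That route needs only the finite-dimensional factorization $E\bigl(\prod_j M(\gamma_j s_j)\bigr)=E\bigl(\prod_j M(\gamma_j)\bigr)E\bigl(\prod_j M(s_j)\bigr)$, which is what properties (\ref{def:prop_self_similar_1}) and (\ref{def:prop_self_similar_2}) are designed to supply. Your route requires the strictly stronger statement that one common factor $M'(\sqrt{c})$, independent of the entire field $\{M''(\tau_q)\}_q$, can be extracted simultaneously from every term of the weighted sum. You correctly flag this as the obstacle, but it cannot be closed from the stated hypothesis (\ref{eq:multiplicative}) alone, which constrains only one-dimensional marginals; so the step from term-by-term scaling to $\mathrm{Var}(Y(cT)\mid M)\disteq M'(\sqrt{c})\,\mathrm{Var}(Y(T)\mid M)$ is a genuine gap rather than a deferred technicality.

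There is also a concrete bookkeeping mismatch. You apply the multiplicative property in the $\tau$ variable, where $\tau\propto\sqrt{t}$, so the dilation $T\mapsto cT$ gives $\tau\mapsto\sqrt{c}\,\tau$ and you extract $M'(\sqrt{c})$, ending with the multiplier $\sqrt{M'(\sqrt{c})}$ for $Y$ itself. The paper first performs the change of variables $\gamma=q/n^2$ so that the argument of $M$ becomes $\gamma\,|u_1-u_2|$, linear in the time separation, and then extracts a factor evaluated at $c$. These are different random variables in general (they coincide only for special cascades, e.g.\ lognormal ones), and only the latter can be matched to the $M(c)$ in the asserted conclusion $E(Y^m(cT))=E(M^m(c))\,E(Y^m(T))$. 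So even granting the field-level factorization, your argument as written establishes a scaling law with a different multiplier than the one claimed.
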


\begin{proof}

We show for all $m\geq 1$,
$$E(Y_{ct}^m)=E(M(c)^m)E(Y_{t}^m)$$
Let us first fix the value of the random variable $M(.)$ evaluated at different times. The process is then gaussian for all such values. Hence, the moments are zero for odd $m$ and powers of the variance for $m$ even.
Similary to the case of fractional Brownian motion, we evaluate the second moment.
One can check that,
\begin{align}
\tilde{E}(X_{t_k}X_{t_l}) &= \sigma^2 \epsilon^2 \sum_{q=n}^{N_{\infty}} \tau_{q}^{-2}M(\tau_q) \left[\sum_{j=n}^q\left(\begin{array}{c}q \\j\end{array}\right)\left(\begin{array}{c}q \\j-n\end{array}\right)\right]\left[q\left(\begin{array}{c}2q \\q\end{array}\right)\right]^{-1}\label{eq:sum_corr_multfrac}
\end{align}
The expectation $\tilde{E}(.)$ was taken with respect to the gaussian random variables $\xi_{t_k,\tau_l}$, excluding $M(.)$.
Repeating the procedure of Stirling's approximation and change of variable, we simplify the density,
\begin{align*}
\sum_{q=n}^{N_{\infty}}\tau_{q}^{-2}M(\tau_q) \left[\sum_{j=n}^q\left(\begin{array}{c}q \\j\end{array}\right)\left(\begin{array}{c}q \\j-n\end{array}\right)\right]\left[q\left(\begin{array}{c}2q \\q\end{array}\right)\right]^{-1}&\approx \sum_{q=n}^{N_{\infty}}n^{-2}\exp\left(-\gamma(q)^{-1}\right)\gamma(q)^{-2}M(\tau_{\gamma n^2})t_n^{-2}\\
&\approx \int_{1/n}^{N_{\infty}/n^2}d\gamma\ \exp\left(-\gamma(q)^{-1}\right)\gamma(q)^{-2}M(\gamma t_{ n})t_n^{-2}
\end{align*}
where we used $t_n \approx n \epsilon$.
Hence,
$$\tilde{E}(Y_{T}^2)=\kappa \int_0^Tdu_1\int_0^Tdu_2 \int_{0}^{\infty}d\gamma\ \exp\left(-\gamma^{-1}\right)\gamma^{-2}M(\gamma |u_1-u_2|)|u_1-u_2|^{-2}$$
for some constant $\kappa$. As higher even moments are proportial to powers of the second moment, we readily see after taking the expectation w.r.t. the distribution of $M(.)$ and using the multiplicative property (\ref{eq:multiplicative}),
$$E(Y_{T}^{2m})\propto \kappa^m \int_0^Td\vec{u} \int_{0}^{\infty}d\vec{\gamma}\ \exp\left(-\sum_j\gamma_j^{-1}\right)\left(\prod_j\gamma_j|u_{2j-1}-u_{2j}|\right)^{-2}E\left(\prod_jM(\gamma_j)\right)E\left(\prod_j M(|u_{2j-1}-u_{2j}|)\right)$$
Using the multicative property, this expression yields the sought property,
$$E\left(Y^m_{cT}\right)=E\left(M^m_{c}\right)E\left(Y^m_{T}\right) $$
from which the claim follows.
\end{proof}

\begin{align}
P\left(Y^{t_{k}}_{\tau_k},Y^{t_{k+1}}_{\tau_k}\right.&\left.|Y^{t_{k}}_{\tau_{k+1}}=z_1 ,Y^{t_{k+1}}_{\tau_{k+1}}=z_2\right)\nonumber \propto  \mathcal{N}\left(y; m_{\tau_k}^{\tau_{k+1}} (z_1+z_2), \sigma^{t_k}_{\tau_{k}},\sigma^{t_{k+1}}_{\tau_{k}} \right) 
\label{eq:MERA_transition}
\end{align}

\end{widetext}
\end{document}